\newcommand{\forces}{\Vdash\!}
\newcommand{\imp}{\!\rightarrow\!}
\newcommand{\imps}{\!\rightarrow\!}
\newcommand{\proves}{\vdash}
\newcommand{\lc}[1]{#1\!\!:\!\!}
\newtheorem{Prop}{\bf Proposition}
\newtheorem{Theor}{\bf Theorem}
\newtheorem{Lemma}{\bf Lemma}
\newtheorem{Coro}{\bf Corollary}
\newtheorem{Fact}{\bf Fact.}
\newtheorem{Remark}{\bf Remark}
\newtheorem{Claim}[enumi]{Claim}
\newtheorem{defin}{\bf Definition}
\newenvironment{definition}{\begin{defin} \em}{\end{defin}}
\newtheorem{exam}{\bf Example}
\newenvironment{example}{\begin{exam} \em}{\end{exam}}
\newtheorem{notat}{\bf Notation.}
\newenvironment{proof}{{\bf Proof.}}{\hfill $\slot$}
\newcommand{\slot}{\hfill \mbox{$\Box$}\vspace{\parskip}\\}
\newtheorem{Comment}{\bf Comment}
\begin{document}

\title{Epistemic Modeling with Justifications
%(draft, not for circulation)
}

\author{Sergei Artemov\\ \\
 {\small The Graduate Center, the City University of New York}\\
{\small  365 Fifth Avenue, New York City, NY 10016}\\
{\small {\tt sartemov@gc.cuny.edu}} }
\date{\today}
\maketitle

\begin{abstract}  
Existing logical models do not fairly represent epistemic situations with fallible justifications, e.g., Russell's Prime Minister example, though such scenarios have long been at the center of epistemic studies. 
We introduce \emph{justification epistemic models}, \emph{JEM}, which can handle such scenarios.
\emph{JEM} makes justifications prime objects and draws a distinction between accepted and knowledge-producing justifications; belief and knowledge become derived notions. 
Furthermore, Kripke models can be viewed as special cases of \emph{JEM}s with additional assumptions of evidence insensitivity and common knowledge of the model. 
We argue that \emph{JEM} can be applied to a range of epistemic scenarios in CS, AI, Game Theory, etc. 

\end{abstract}

\medskip\par
\section{Context and motivations}
We make the case that formal epistemology should venture, when needed, beyond modal logic limits. 
This includes a formal treatment of justifications, which have been at the heart of epistemic discussions since Plato. 

Proof systems of justification logic and general purpose classes of models for these systems have been studied in \cite{Art01a,Art08,Art12,AF15,Fit05,Fit14,KS12,Mkr97,Sed13} and many other sources. However, for formalizing epistemic scenarios, one needs specific \emph{domain-dependent models}, with additional features that are not necessary for standard soundness and completeness analysis of proof systems.  
%Such models have been our main focus.
%: we move \emph{from an epistemic situation to a model}. 

We introduce \emph{justification epistemic models}, {\it JEM}, in which justifications are primary objects and a distinction is made between \emph{accepted} and \emph{knowledge-producing} justifications. In \emph{JEM}s, belief and knowledge are derived notions which depend on the status of supporting justifications.  We argue that {\it JEM}s can work in situations in which standard non-hyperintensional tools (Kripke, topological, algebraic) fail to fairly represent the corresponding epistemic structure. 

This paper is self-contained due to streamlining the exposition of justification models, an essential ingredient of \emph{JEM}s.

\subsection{Comparisons to other semantics of justifications}

Justification logics are obtained from the usual propositional logic by adding new atoms of the sort \[ \lc{t}F \] in which $t$ denotes a justification, $F$ a formula, and $\lc{t}F$ is read informally as 
\[ \mbox{\it t is a justification of F.}
\]

The first mathematical semantics of justification logic appeared in \cite{Art95TR} (cf. also \cite{Art01a}) for the Logic of Proofs {\sf LP}, namely, a constructive version of the {\bf canonical model} and {\bf arithmetical semantics} in which $\lc{t}F$ was interpreted as \emph{t is a proof of F}. 

In \cite{Mkr97}, the canonical model construction for {\sf LP} was formulated in a set-theoretic format, with interpreting proof objects $t$ as sets of formulas $t^\ast$ and speaking of ``$\lc{t}F$ is true" as $F\in t^\ast$. This led to generic {\bf basic models} for justification logics (cf. \cite{Art12,KS12} and the current paper). In a basic model $\ast$, 
\[ 
\models_\ast \lc{t}F\ \ \ \mbox{\it iff}\ \ \ F\in t^{\ast} .
\]
To represent multiple possible worlds, basic models were generalized to {\bf modular models} for justification logic {\sf J} in \cite{Art12}, and then extended to other justification logics in \cite{KS12}. A modular model is a Kripke model $(W,R,\models)$ with a basic model $\ast(u)$ at each world $u\in W$. Evaluation of justification assertions $\lc{t}F$ is defined canonically:
\[
u\models \lc{t}F\ \ \ \mbox{\it iff}\ \ \ F\in t^{\ast(u)}.
\]

{\bf Mkrtychev models} were also developed in \cite{Mkr97} to capture reflexivity in the Logic of Proofs. In Mkrtychev models, propositional and justification evaluations are similar to those in basic models (propositions are Boolean, justification terms are sets of formulas), but the justification assertions $\lc{t}F$ are evaluated as ``$t$ is a justification for $F$ and $F$ is true":
\[
\models_\ast \lc{t}F\ \ \ \ \ \ \mbox{\it iff}\ \ \ \ \ \ F\in t^\ast\ \mbox{\it and}\ \models_\ast F .
\]
Conceptually, Mkrtychev models are close to reflexive basic models: each reflexive basic model is a Mkrtychev model, each Mkrtychev model contains a reflexive basic model, 
However, Mkrtychev models are often easier to work with for justification logics with reflection. 

{\bf Fitting models} for {\sf LP} were introduced and developed in \cite{Fit05} and then extended to other justification logics. Conceptually, the basic model approach reflects only one reason for \emph{not knowing $F$}:  no sufficient justification for $F$ is available. In contrast, Fitting models take into account two reasons for \emph{not knowing $F$}: the \emph{Kripkean reason}, $F$ fails in some possible world; and the \emph{awareness reason}, no justification for $F$ is available. Technically, a Fitting model is a Kripke model $(W,R,\models)$ with a basic model $\ast(u)$ at each world $u\in W$ (together with appropriate closure conditions). The difference between Fitting models and modular models lies in the evaluation of justification assertions: for Fitting models, 
\[
u\models \lc{t}F\ \ \ \ \ \ \mbox{\it iff}\ \ \ \ \ \ F\in t^{\ast(u)}\ \mbox{\it and}\ \ R(u)\models F .
\]
This secures the \emph{Justification Yields Belief (JYB) property}: $\lc{t}F$ yields `$F$ is believed' (i.e., $F$ holds at all accessible worlds). 

Modular/basic models have clear motivations and provide an answer to the ontological question \emph{What kind of logical object is a justification?} However, these models might be difficult to work with, and more convenient constructions, such as Mkrtychev and Fitting models, are normally used instead. 

Technically, basic models and Mkrtychev models may be regarded as special cases of Fitting models. On the other hand, Fitting models can be identified as modular models with \textit{JYB}, cf. \cite{Art12}. This provides a natural hierarchy of the aforementioned classes of models: 
\[ \mbox{\it basic and Mkrtychev models}\ \ \subset\ \ \mbox{\it Fitting models}\ \ \subset\ \ \mbox{\it modular models}\ \ \subset\ \ \mbox{\it JEM}. \]
Even the smallest class, basic models, is already sufficient for mathematical completeness of justification logics. So, the main idea of progressing to Fitting models, modular models, or 
\emph{JEM}s is not a pursuit of completeness but rather a desire to offer natural and manageable models for a variety of epistemic situations involving evidence, belief, and knowledge. 

\subsection{Novelties}

Several things in this paper might be considered novelties.

1. Separating {\bf accepted justifications} and {\bf knowledge-producing justifications}, and making knowledge and justified belief derived notions. Building an adequate model for Russell's Prime Minister example as a showcase.

2. Streamlined exposition of generic semantics for justifications as sets of formulas, connecting basic justification models with the canonical model of propositional classical logic.

3. Asserting logic ${\sf J}^-$ instead of {\sf J} (cf. Section~\ref{other}) as in \cite{Art08}, to be the base system of justification epistemic logic. 

\section{Preliminaries}
Standard modal epistemic models have ``propositional'' precision, i.e., they do not distinguish sentences with the same truth values at each possible world. The expressive power of such models for analysis of justification, belief, and knowledge is rather limited, and so we have to ``go hyperintensional."\footnote{From \cite{Cre75}: ``Hyperintensional contexts are simply contexts which do not respect logical equivalence."} 
Specifically, if, at all possible worlds, $t$ is a justification for $F$ 
\[ \forces\lc{t}F ,\]
and $G$ has the same truth value as $F$ 
\[ \forces F\leftrightarrow G,\] 
we still cannot conclude that $t$ is a justification for $G$
\[ \not\forces\lc{t}G .\]
A natural example comes from mathematics: both statements $0=0$ and {\it Fermat's Last Theorem}, FLT, are true (proven) mathematical facts and hence are true at all possible worlds. However, we cannot claim that a proof of $0=0$ is a proof of FLT as well.

A sample justification logic analysis of standard epistemic situations (Gettier examples, Red Barn example) is presented in \cite{Art08} using Fitting models \cite{Fit05}
though, due to the relative simplicity of those examples, this analysis could be replicated in a bi-modal language (cf. \cite{Wil15}). 

However, we cannot go much farther without adopting a justification framework: the situation changes when we have to represent several conflicting pieces of evidence for a stated fact, cf. the following Russell example of 1912 (\cite{Rus12}): 

\begin{quote}
{\it 
If a man believes that the late Prime Minister's last name began with a `B,' he believes what is true, since the late Prime Minister was Sir Henry Campbell Bannerman\footnote{which was true in 1912}. But if he believes that Mr. Balfour was the late Prime Minister, he will still believe that the late Prime Minister's last
name began with a `B,' yet this belief, though true, would not be thought to constitute knowledge.
}
\end{quote}
To keep it simple, we consider proposition $B$ 
\[ \mbox{\it the late Prime Minister's last
name began with a `B,' }
\] 
with two justifications for $B$: the right one $r$ and the wrong one $w$; the agent chooses $w$ as a reason to believe that $B$ holds. 

\begin{quote}
There is a mathematical version of the story with a true proposition and its two justifications; one is correct, the other is not. Consider the picture\footnote{which the author saw on the door of the Mathematics Support Center at Cornell this February.}:
\begin{equation}\label{frac}
\frac{1\!\!\!\not 6}{\not{6}4}=\frac{1}{4} .
\end{equation}
The true proposition is ``$16/64 = 1/4$," the right justification is dividing both the numerator and the denominator by $16$, and the wrong (but shorter and easier) justification is simplifying as in~(\ref{frac}).
\end{quote}

To formalize Russell's scenario in modal logic (cf. \cite{Wil15}), we introduce two modalities: {\bf K} for knowledge and {\bf J} for justified belief. In the real world, 
\begin{itemize}
\item $B$ holds;
\item ${\bf J}B$ holds, since the agent has a justification $w$ for $B$;
\item ${\bf K}B$ does not hold; 
\end{itemize}
thus yielding the set of assumptions
\[ \Gamma=\{B, \ {\bf J}B, \ \neg{\bf K}B\}.\]

However, $\Gamma$ doesn't do justice to Russell's scenario: the right justification $r$ is not represented and $\Gamma$ rather corresponds to the same scenario but lacking $r$. The intrinsic epistemic structure of the example is not respected. 

Within the \emph{JEM} framework, introduced in this paper, we provide a model for Russell's Prime Minister example which, we wish to think, fairly represents its intrinsic epistemic structure.

\section{Generic logical semantics of justifications}

What kinds of logical objects are justifications?  When asked in a mathematical context ``what is a predicate?," we have a ready answer: a subset of a Cartesian product of the domain set.  Within an exact mathematical theory, there should be a similar kind of answer to the question ``what is a justification?''. 
 
We consider this question in its full generality which, surprisingly, yields a clean and meaningful answer. We assume the language of justification logic consists of two disjoint sets of syntactic objects: 
\begin{enumerate}
\item a set of {\bf justification terms} \emph{Tm}; 
\item a set of {\bf formulas} \emph{Fm}, built inductively from propositional atoms using Boolean connectives and the justification formula formation rule: if $F$ is a formula, 
$F\in\emph{Fm}$, and $t$ a justification term, $t\in\emph{Tm}$, then $\lc{t}F$ is again a formula, $\lc{t}F\in\emph{Fm}$.
\end{enumerate}

The meaning assigned to formulas is a classical truth value, $0$ for \emph{false} and $1$ for \emph{true}, and we retain classical logic behavior for propositional connectives. The key item is to give meaning to justification terms, and this will be a \emph{set of formulas} interpreted as \emph{the set of formulas for which it is a justification}. Here is a formal definition.

\begin{definition}[Basic Model]
A \emph{basic model}, simply called $\ast$, consists of an interpretation of the members of \emph{Fm}, and an interpretation of the members of \emph{Tm}.

The interpretation of a formula in a basic model is a truth value.  That is,
\[ \ast: \emph{Fm} \mapsto \{0,1\}. \]
We assume $(X\imp Y)^\ast = 1$ if and only if $X^\ast=0$ or $Y^\ast = 1$, and similarly for the other Boolean connectives. Let also $\models_ \ast X$ stand for $X^\ast =1$.

We interpret justification terms as \emph{sets of formulas}.  That is, 
\[  \ast: \emph{Tm}\mapsto 2^\emph{Fm}. \]
Our final requirement connects the two mapping roles that $\ast$ plays in a basic model.  For any $X\in\emph{Fm}$ and any $t\in\emph{Tm}$,
\[
\models_\ast \lc{t}X\ \mbox{ if and only if }\ X\in t^\ast.
\]
\end{definition}

It is easy to check that any mapping $\ast$ from propositional letters to truth values, and from justification terms to sets of formulas determines a unique basic model. 

So far, a basic model is nothing but a classical propositional model in which justification assertions $\lc{t}F$ are treated as \emph{independent propositional atoms}. 

Note that while propositions are interpreted semantically, as truth values, justifications are interpreted syntactically, as sets of formulas. This is a principal {\it hyperintensional} feature: a basic model may treat distinct formulas $F$ and $G$ as equal, i.e. $F^\ast=G^\ast$,  but still be able to distinguish justification assertions $\lc{t}F$ and $\lc{t}G$, e.g., when $F\in t^\ast$, but $G\not\in t^\ast$ yielding $\models_\ast \lc{t}F$ but $\not\models_\ast\lc{t}G$.

\begin{definition}\label{classicalcl}
Let $S\subseteq\emph{Fm}$ and $X\in\emph{Fm}$.  We write $S\proves X$ if formula $X$ is derivable from the set $S$ of formulas in classical logic treating justification assertions $\lc{t}F$ as propositional atoms, and with \emph{Modus Ponens} as the only rule of inference.  We say that $S$ is \emph{consistent} if $S\not\proves \bot$.
\end{definition}
{\bf A basic model of $S$ is merely a possible world containing $S$ in the canonical model}, i.e., a maximal consistent set $\Gamma$ of formulas, with the convenience agreement to read $\lc{t}F\in\Gamma$ as $F\in \{X \mid \lc{t}X\in\Gamma\}$. In this respect, basic models and the canonical model are slightly different but obviously equivalent ways of presenting the same object. When we move to more sophisticated models (Fitting models, modular models), the advantage of dealing with sets and operations (e.g. basic models) over logical conditions (e.g. the canonical model) becomes clear.

\begin{definition}
For $S\subseteq\emph{Fm}$, $\emph{BM}(S)$ is the \emph{class of all basic models of $S$}.
\end{definition}

\begin{Theor}[Generic Soundness and Completeness]\label{generic} Each set of formulas $S$ is sound and complete with respect to its class of basic models ${\it BM}(S)$. In other words, $S\proves F$ if and only if $F$ is true in each basic model of $S$.
\end{Theor}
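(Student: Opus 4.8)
The plan is to reduce the claim to the classical completeness theorem for propositional logic, using the observation made just before and just after Definition~\ref{classicalcl}: a basic model of $S$ is exactly a maximal consistent set of formulas containing $S$, read through the convention identifying $\lc{t}X\in\Gamma$ with $X\in\{Y\mid\lc{t}Y\in\Gamma\}$. For the propositional calculus of Definition~\ref{classicalcl}, the relevant ``atoms'' are the propositional letters together with all formulas of the shape $\lc{t}F$, and every member of \emph{Fm} is a Boolean combination of these atoms; keeping this notion of atom fixed throughout is what makes the argument go smoothly.

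First I would handle soundness (the ``only if'' direction). Assume $S\proves F$ and let $\ast$ be any basic model of $S$, so $\models_\ast X$ for all $X\in S$. By induction on the length of a derivation of $F$ from $S$: in the base case $F\in S$, or $F$ is a propositional tautology over the atoms above and is therefore true under every assignment of truth values to atoms, in particular under $\ast$, since a basic model respects the Boolean connectives; the only inference rule is Modus Ponens, and from $(X\imp Y)^\ast=1$ and $X^\ast=1$ one gets $Y^\ast=1$. Hence $\models_\ast F$.

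Next, completeness (the ``if'' direction), argued contrapositively. Suppose $S\not\proves F$; then $S\cup\{\neg F\}$ is consistent in the sense of Definition~\ref{classicalcl}, and by a Lindenbaum construction it extends to a maximal consistent set $\Gamma\supseteq S\cup\{\neg F\}$. Define $\ast$ on atoms by $p^\ast=1$ iff $p\in\Gamma$ for propositional letters $p$, and $t^\ast=\{X\in\emph{Fm}\mid\lc{t}X\in\Gamma\}$ for $t\in\emph{Tm}$; by the remark following the definition of basic models this determines a unique basic model. The key step is a Truth Lemma: $\models_\ast G$ iff $G\in\Gamma$ for every $G\in\emph{Fm}$, proved by induction on the Boolean structure of $G$ --- for an atom $p$ this is the definition of $\ast$; for an atom $\lc{t}X$ we have $\models_\ast\lc{t}X$ iff $X\in t^\ast$ (the defining requirement on basic models) iff $\lc{t}X\in\Gamma$ (definition of $t^\ast$); the connective cases are the standard ones, using only maximal consistency of $\Gamma$. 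The Truth Lemma then gives $\models_\ast X$ for all $X\in S$, so $\ast\in{\it BM}(S)$, while $\models_\ast\neg F$, so $\not\models_\ast F$; thus $F$ fails in some basic model of $S$.

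I expect the main obstacle to be purely a matter of bookkeeping in the Truth Lemma: one must treat each $\lc{t}X$ as atomic for the propositional induction while at the same time tying its truth value to $\Gamma$ through the set $t^\ast$. Once ``atom'' is pinned down as above, the induction is routine, and, crucially, no closure conditions on $\ast$ are required because $S$ is an arbitrary set of formulas with no structural constraints; everything else is the textbook Lindenbaum argument for propositional logic.
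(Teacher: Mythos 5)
Your proof is correct and follows essentially the same route as the paper: both directions reduce to classical propositional logic with the formulas $\lc{t}X$ treated as atoms, and the countermodel is obtained by turning a Boolean valuation refuting $F$ into a basic model via $t^\ast=\{X\mid \lc{t}X \mbox{ is true}\}$. The only difference is one of packaging --- the paper invokes classical completeness with hypotheses as a black box to get the refuting valuation, whereas you unfold it into the explicit Lindenbaum construction and Truth Lemma, which is exactly the canonical-model reading the paper itself advertises just before the theorem.
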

\begin{proof} This theorem is merely a reformulation of the soundness and completeness of classical propositional logic with hypotheses. Indeed, if $S\proves F$ and $\models_\ast S$, then $\models_\ast F$ since propositional derivations respect validity. 

If $S\not\proves F$, then there is a Boolean evaluation $\ast$ which makes all formulas from $S$ true, $S^\ast=1$, and $F$ false, $F^\ast=0$. In this case there are two types of atomic propositions: propositional letters $P$ and justification assertions $\lc{t}X$. Define 
\[ t^\ast = \{ X\mid (\lc{t}X)^\ast=1\}\]
and note that $(\lc{t}X)^\ast=1$ iff $X\in t^\ast$. Therefore, $\ast$ is a propositional evaluation and $\ast$ is a basic model yielding the same truth values of atomic formulas $P$ and $\lc{t}X$. Since $S^\ast=1$ and $F^\ast=0$, we have $\models_\ast S$ and $\not\models_\ast F$ for basic model $\ast$.
\end{proof}

An easy corollary: $\proves F$ iff $F$ is a tautology (with $\lc{t}X$'s as distinct propositional atoms).  

\begin{example}\label{e1} We work with classical logic over \emph{Fm}, as in Definition~\ref{classicalcl}, taking $S=\emptyset$.
\begin{enumerate}
\item For any justification term $t$, \[\not\proves \lc{t}F.\] 
Straightforward, since $\lc{t}F$ is not a propositional tautology. For a specific countermodel put $t^\ast=\emptyset$ for each term $t\in\emph{Tm}$, which makes $\not\models_\ast\lc{t}F$. 

\item For any propositional letter $P$, and term $t$, 
\[\not\proves \lc{t}P\imp P.\] 
Likewise, this holds because $\lc{t}P\imp P$ is not a propositional tautology. Specifically, put $t^\ast=\emph{Fm}$ and $P^\ast=0$, with other  assignments being arbitrary. In this model, all justification assertions are true, but $\lc{t}P\imp P$ is false.

\item  For any propositional letter $P$, and term $t$, 
\[\not\proves P\imp \lc{t}P.\] 
Again, this holds since $P\imp \lc{t}P$ is not a propositional tautology. For example, put $t^\ast=\emptyset$ and $P^\ast=1$. In this model, $t$ is not a justification for $P$ (i.e., $\not\models_\ast \lc{t}P$) and $P\imp \lc{t}P$ is false. 

\item A somewhat less trivial example illustrating hyperintensionality: for a justification variable $x$ and formula $F$
\[\not\proves \lc{x}F\imp\lc{x}(F\land F).\]
The high-level argument is the same: formulas $\lc{x}F$ and $\lc{x}(F\land F)$ from Boolean evaluation point of view can be regarded as distinct propositional variables. Hence $\lc{x}F\imp\lc{x}(F\land F)$ is not a tautology. For a countermodel, take $x^\ast=\{F\}$. Then 
$\models_\ast\lc{x}F$, but $\not\models_\ast \lc{x}(F\land F)$. This 
demonstrates hyperintensionality of justification logic base, since $F$ and $F\land F$ are provably equivalent, but not $\lc{x}F$ and $\lc{x}(F\land F)$.
\end{enumerate} 
\end{example}

\section{Basic Justification Logic ${\sf J}^-$}
Within the Justification Logic framework, there are two sorts of logical objects: justification terms \emph{Tm} and formulas \emph{Fm}. Let us become more specific about both. 
\begin{itemize}
\item For \emph{Tm}, reserve a set of justification constants $a,b,c,\ldots$ with indices, and variables  $x,y,z,\ldots$ with indices. Justification terms are built from constants and variables by a binary operation $\cdot$  (application). 
\item Formulas are built from propositional letters $p,q,r,\ldots$ (with indices) and Boolean constant $\bot$ (falsum) by the standard Boolean connectives $\wedge,\vee,\imp,\neg$  with a new formation rule: {\it whenever $t$ is a justification term and $F$ is a formula, $\lc{t}F$ is a formula (with the informal reading ``$t$ is a justification for $F$'')}. 
\end{itemize}
\par\noindent
The \emph{logical system ${\sf J}^-$} consists of two groups of postulates.
\par
\begin{itemize}
\item {\bf Background logic}: axioms of classical propositional logic, rule \emph{Modus Ponens}.
\item {\bf Application}: $\lc{s}(F\imp G)\imp(\lc{t}F\imp\lc{[s\!\cdot\! t]}G)$. 
\end{itemize}
Basic models corresponding to ${\sf J}^-$ are the ones in which the application axiom holds. They can be specified by a natural combinatorial condition. 

\begin{definition}
For sets of formulas $S$ and $T$, we define \[S\triangleright T = \{ F \mid G\imp F\in S \ \mbox{and}\ G\in T \  \mbox{for some $G$} \}.\]
Informally, $S\triangleright T$ is the result of applying {\em Modus Ponens} once to all members of $S$ and of $T$ (in a given order). 
\end{definition}
\begin{Theor}\label{bmodelsforJ0} ${\it BM}({\sf J}^-)$ is the class of basic models with the following closure condition
\begin{equation}\label{ccapplication}
s^\ast\triangleright t^\ast\subseteq (s\!\cdot\! t)^\ast.
\end{equation}
\end{Theor}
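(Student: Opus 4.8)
The plan is to prove the two inclusions separately, reading ``$\ast\in{\it BM}({\sf J}^-)$'' as ``$\ast$ validates every instance of the application axiom.'' First I would justify that reformulation: a basic model is, by Definition, a classical propositional valuation on the atoms $P$ and $\lc{t}X$, so \emph{Modus Ponens} preserves truth at $\ast$; hence $\models_\ast{\sf J}^-$ (i.e.\ every theorem of ${\sf J}^-$ holds at $\ast$) is equivalent to $\ast$ validating the background propositional axioms — automatic for any basic model — together with all instances of $\lc{s}(F\imp G)\imp(\lc{t}F\imp\lc{[s\!\cdot\! t]}G)$. The rule need not be checked separately.

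For the direction ``closure condition $\Rightarrow$ model of ${\sf J}^-$'': assume $s^\ast\triangleright t^\ast\subseteq(s\!\cdot\! t)^\ast$ for all terms $s,t$, fix an arbitrary instance of Application, and suppose $\models_\ast\lc{s}(F\imp G)$ and $\models_\ast\lc{t}F$. By the basic-model clause this means $F\imp G\in s^\ast$ and $F\in t^\ast$, so by the definition of $\triangleright$ — instantiating its bound antecedent by $F$ and its bound consequent by $G$ — we get $G\in s^\ast\triangleright t^\ast$, whence $G\in(s\!\cdot\! t)^\ast$ by the closure condition, i.e.\ $\models_\ast\lc{[s\!\cdot\! t]}G$. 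Since $\ast$ is Boolean, the displayed implication is true at $\ast$; as the instance was arbitrary, $\ast\in{\it BM}({\sf J}^-)$.

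For the converse, assume $\ast\in{\it BM}({\sf J}^-)$ and take $H\in s^\ast\triangleright t^\ast$. By definition there is a formula $G$ with $G\imp H\in s^\ast$ and $G\in t^\ast$, i.e.\ $\models_\ast\lc{s}(G\imp H)$ and $\models_\ast\lc{t}G$. Now invoke the relevant instance $\lc{s}(G\imp H)\imp(\lc{t}G\imp\lc{[s\!\cdot\! t]}H)$, which holds at $\ast$; two applications of \emph{Modus Ponens} inside the Boolean valuation $\ast$ give $\models_\ast\lc{[s\!\cdot\! t]}H$, that is, $H\in(s\!\cdot\! t)^\ast$. Hence $s^\ast\triangleright t^\ast\subseteq(s\!\cdot\! t)^\ast$, and the two directions together give the theorem.

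I expect no serious obstacle: the argument is bookkeeping built on Theorem~\ref{generic} and the definitions. The only points that deserve care are (i) the reduction of ``$\models_\ast{\sf J}^-$'' to validity of all axiom instances, so that the rule \emph{Modus Ponens} is handled once and for all by the Boolean character of basic models, and (ii) keeping the bound variable names in the definition of $\triangleright$ disjoint from those of the axiom schema, so that the witnessing antecedent is matched to the correct subformula in both directions.
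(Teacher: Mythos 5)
Your proposal is correct and follows essentially the same two-directional argument as the paper: unpack $\models_\ast\lc{s}(F\imp G)$ and $\models_\ast\lc{t}F$ into membership statements to get the application axiom from the closure condition, and run the same translation in reverse for the converse. The extra care you take in reducing ``basic model of ${\sf J}^-$'' to ``validates all instances of Application'' (handling \emph{Modus Ponens} via the Boolean character of $\ast$) is left implicit in the paper but is exactly the right justification.
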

\begin{proof} Let us assume the closure condition (\ref{ccapplication}) and check the validity of the application axiom. Indeed, 
$\models_\ast \lc{s}(F\imp G)$ and $\models_\ast\lc{t}F$ yield $(F\imp G)\in s^\ast$ and $F\in t^\ast$. By the closure condition, $G\in [s\!\cdot\! t]^\ast$, i.e., $\models_\ast \lc{[s\!\cdot\! t]}G$. 

Now assume the application axiom and derive the closure condition (\ref{ccapplication}). Let $(F\imp G)\in s^\ast$ and $F\in t^\ast$. By definitions, this yields $\models_\ast \lc{s}(F\imp G)$ and $\models_\ast\lc{t}F$. By the application axiom, $\models_\ast \lc{[s\!\cdot\! t]}G$, hence $G\in [s\!\cdot\! t]^\ast$. \end{proof} 

\begin{example}\label{e2} None of the formulas from Example~\ref{e1}:  $\lc{t}F$, $\lc{t}P\imp P$, $P\imp \lc{t}P$, $\lc{x}F\imp\lc{x}(F\land F)$ is derivable in ${\sf J}^-$. Indeed, every specific evaluation from Example~\ref{e1}.1--3 satisfies the closure condition (\ref{ccapplication}), hence their countermodels are ${\sf J}^-$-models. Consider the latter formula (4). Put $x^\ast=\{F\}$ and $t^\ast={\it Fm}$ for all other terms $t$. The closure condition (\ref{ccapplication}) holds vacuously, hence $\ast$ is a ${\sf J}^-$-model. Obviously, $\models_\ast \lc{x}F$ and $\not\models_\ast\lc{x}(F\land F)$.
\end{example}

Constants in justification logic are used to denote justifications of assumptions, in particular, axioms. Indeed, as we have already seen in Example~\ref{e2}, no formula $\lc{t}F$ is derivable in ${\sf J}^-$. In particular, no logical axiom is assumed justified in ${\sf J}^-$ which is not realistic. 

\begin{definition} A set $S$ of formulas is \emph{reflexive} if for each $\lc{s}\lc{t}F\in S$, $\lc{t}F$ is also in $S$. 
By constant specification \emph{CS} we understand a reflexive set of formulas of the type 
\[ \lc{c_n}\lc{c_{n-1}}\lc{c_{n-2}}\ \ldots\ \lc{c_1}A \]
where $A$ is a ${\sf J}^-$-axiom and $c_i$ are justification constants. The major classes of constant specifications: 
\begin{itemize}
\item {\it empty}, 
\item {\it total|}  (each constant is a justification for each axiom), 
\item {\it axiomatically appropriate} (each axiom has a justification at any depth),
\item {\it injective} (each constant justifies at most one axiom).
\end{itemize}
\end{definition}

Let  \emph{CS} be a constant specification. Then by ${\sf J}^-(\emph{CS})$, we understand ${\sf J}^-$ with additional axioms \emph{CS}. 
A \emph{CS-model} is a model in which all formulas from \emph{CS} hold. 

\begin{Coro}\label{corJ} Basic models for ${\sf J}^-({\it CS})$ are the basic CS-models for ${\sf J}^-$. ${\sf J}^-({\it CS})$ is sound and complete with respect to the class of its basic models.
\end{Coro}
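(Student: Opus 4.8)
The plan is to derive this entirely from Theorem~\ref{generic} and Theorem~\ref{bmodelsforJ0}, by treating ${\sf J}^-({\it CS})$ as classical derivability-from-hypotheses in the sense of Definition~\ref{classicalcl}. Let $S_0$ be the set of all instances of the application axiom $\lc{s}(F\imp G)\imp(\lc{t}F\imp\lc{[s\cdot t]}G)$. Since the background logic of ${\sf J}^-$ is exactly classical propositional logic with \emph{Modus Ponens}, the theorems of ${\sf J}^-$ are precisely the formulas $F$ with $S_0\proves F$, and the theorems of ${\sf J}^-({\it CS})$ are precisely those with $S_0\cup{\it CS}\proves F$ — here ${\it CS}$, being a set of formulas, is simply adjoined to the hypotheses.

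First I would record the elementary fact that $\emph{BM}(\cdot)$ turns unions of hypothesis sets into intersections of model classes: $\emph{BM}(S\cup T)=\emph{BM}(S)\cap\emph{BM}(T)$, immediately from the definition of a basic model of a set of formulas. Applying this with $S=S_0$ and $T={\it CS}$ gives $\emph{BM}(S_0\cup{\it CS})=\emph{BM}(S_0)\cap\emph{BM}({\it CS})$. By Theorem~\ref{bmodelsforJ0}, $\emph{BM}(S_0)=\emph{BM}({\sf J}^-)$ is the class of basic models satisfying the closure condition~(\ref{ccapplication}); and $\emph{BM}({\it CS})$ is by definition the class of basic models in which every formula of ${\it CS}$ holds, i.e., the basic \emph{CS}-models. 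Intersecting, $\emph{BM}(S_0\cup{\it CS})$ is exactly the class of basic \emph{CS}-models for ${\sf J}^-$, which is the first assertion of the corollary.

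For the second assertion I would apply Theorem~\ref{generic} to the set $S:=S_0\cup{\it CS}$: it states that $S\proves F$ iff $F$ is true in every member of $\emph{BM}(S)$. Combining this with the identifications above — $S\proves F$ iff ${\sf J}^-({\it CS})\proves F$, and $\emph{BM}(S)$ equals the class of basic models of ${\sf J}^-({\it CS})$ — yields soundness and completeness of ${\sf J}^-({\it CS})$ with respect to its class of basic models.

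I do not expect a genuine obstacle: the whole argument is a bookkeeping consequence of the two preceding theorems together with the distributivity of $\emph{BM}(\cdot)$ over unions. The only point that warrants a moment's care is the routine identification of the Hilbert-style system ${\sf J}^-({\it CS})$ (classical axioms, the application scheme, the formulas of \emph{CS}, and \emph{Modus Ponens}) with classical derivability-from-hypotheses as in Definition~\ref{classicalcl}, so that Theorem~\ref{generic} is indeed applicable to it.
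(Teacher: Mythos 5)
Your proposal is correct and matches the route the paper intends: the corollary is stated without proof precisely because it is the bookkeeping consequence of Theorem~\ref{generic} (applied to the hypothesis set consisting of the application instances together with \emph{CS}) and Theorem~\ref{bmodelsforJ0} that you spell out. The only step you add explicitly, the distributivity $\emph{BM}(S\cup T)=\emph{BM}(S)\cap\emph{BM}(T)$, is immediate from the definitions and is exactly the right observation.
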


\subsection{Sharp models}

In closure condition~(\ref{ccapplication}) from Theorem~\ref{bmodelsforJ0}, one cannot, generally speaking, replace the inclusion ``$\subseteq$" by the equality ``$=$" without violating completeness Theorem~\ref{generic}. 

Indeed, fix a justification constant $0$ and consider logic \[{\cal L} = {\sf J}^- + \{\neg\lc{0}F\mid F\in{\it Fm}\}.\]
Informally, justification $0$ receives empty evaluation in any basic model, $0^\ast=\emptyset$. We claim that formula 
$F= \neg\lc{[0\!\cdot\! 0]}P$
is not derivable in $\cal L$, but is true in any basic model of $\cal L$ with the closure condition 
$s^\ast\triangleright t^\ast = (s\!\cdot\! t)^\ast$. 
To show that ${\cal L}\not\proves F$, it suffices to find a basic model for $\cal L$ in which $F$ is false. Consider a basic model $\sharp$ such that $0^\sharp=\emptyset$ and $t^\sharp=\emph{Fm}$ for any other justification term $t$. Obviously, the closure condition from Theorem~\ref{bmodelsforJ0}, together with $0^\sharp=\emptyset$, is met. Therefore, $\sharp$ is a basic model of $\cal L$. It is immediate that $F$ is false in $\sharp$, since $[0\!\cdot\! 0]^\sharp=\emph{Fm}$.
On the other hand, for any basic model of $\cal L$ with the closure condition $(0\!\cdot\! 0)^\ast = 0^\ast\triangleright 0^\ast$, $(0\!\cdot\! 0)^\ast = \emptyset$ since 
$0^\ast = \emptyset$ and $\emptyset\triangleright \emptyset = \emptyset$. 

\begin{definition} \emph{Sharp} basic models are those in which the application closure condition has the form 
\begin{equation}\label{=application}
(s\!\cdot\! t)^\ast = s^\ast\triangleright t^\ast.
\end{equation}
\end{definition} 

One can prove that ${\sf J}^-$ and all its constant specification extensions of ${\sf J}^-(CS)$ are complete with respect to their sharp basic models. We skip the proof as not essential for the purposes of this paper.  

Note that a sharp model is completely defined by evaluations of atomic propositions and atomic justifications. 

\subsection{Other justification logics}\label{other}

There is a whole family of justification logics and they all extend ${\sf J}^-$; the reader is referred to \cite{Art08,AF15} for details. Here we list just the main systems of justification logic for a general orientation. 

Logic {\sf J} is obtained from ${\sf J}^-$ by adding a new operation on justifications `$+$' and the Sum principle
\[ \lc{s}F\vee\lc{t}F\imp\lc{[s+t]}F .\]
Logics {\sf JD}, {\sf JT}, {\sf J4}, {\sf J5}, etc., are obtained by adding the corresponding combination of principles
\[ D = \neg\lc{t}\bot ,\]
\[ T= \lc{t}F\imp F , \]
\[ 4 = \lc{t}F\imp\ \lc{!t}\lc{t}F , \]
\[ 5 = \neg\lc{t}F\imp\ \lc{?t}\neg\lc{t}F .\]

Historically, the first justification logic was the Logic of Proofs {\sf LP} \cite{Art95TR,Art01a} which in these notations can be identified as {\sf JT4}. 
By now, the family of justification logics has grown to be infinite, cf. \cite{Fit16}.

\section{Justification Epistemic Models}\label{JEMs}

We need more expressive power to capture epistemic differences between justifications and their use by the knower. Some justifications are knowledge-producing, some are not. The agent makes choices on which justifications to base an agent's beliefs/knowledge and which justifications to ignore in this respect. These actions are present in epistemic scenarios, out of which we will primarily focus on Russell's example, which has it all:
\begin{itemize}
\item there are justifications $w$ (Balfour was the late prime minister) and $r$ (Bannerman was the late prime minister) for $B$; 
\item $r$ is knowledge-producing whereas $w$ is not;
\item the agent opts to base his belief on $w$ and ignores $r$;
\item the resulting belief is evidence-based, but is not knowledge. 
\end{itemize}

Fix ${\sf J}^-(CS)$ for some axiomatically appropriate constant specification \emph{CS}. 

\begin{definition}  A set $X$ of justification terms is {\it properly closed} if $X$ contains all constants and is closed under applications.
If $Y$ is a set of justification terms, then by $\overline{Y}$ we mean the proper closure of $Y$, i.e., the minimal properly closed superset of $Y$.  
\end{definition}

\begin{definition}\label{LEMdef}
A (basic) {\it Justification Epistemic Model} (JEM) is $$(\ast, {\cal A}, {\cal E})$$ where 
\begin{itemize}
\item $\ast$ is a basic ${\sf J}^-(CS)$-model;
\item ${\cal A}\subseteq{\it Tm}$ is a properly closed set ${\cal A}$ of {\it accepted justifications}\footnote{Consistency $\models_\ast\neg\lc{t}\bot$ for any $t\in{\cal A}$ may or may not be  required.};
\item ${\cal E}\subseteq{\it Tm}$ is a properly closed sets ${\cal E}$ of {\it knowledge-producing justifications}; all knowledge-producing justification should be factive, $\models_\ast\lc{t}F\imp F$ for each $F$ and each $t\in{\cal E}$.
\end{itemize}
\end{definition}
Both sets ${\cal A}$ and ${\cal E}$ contain all constants. This definition presumes that constants in a model are knowledge-producing and accepted. 

\begin{definition}
Sentence $F$ is {\it believed} if there is $t\in{\cal A}$ such that $\models_\ast\lc{t}F$. Sentence $F$ is {\it known}  if there is $t\in{\cal A}\cap{\cal E}$ such that $\models_\ast\lc{t}F$.
\end{definition}

By \emph{ground term} we understand a term containing no (justification) variables. In other words, a term is ground iff it is built from justification constants only. 

Sets of accepted and knowledge-producing justifications overlap on ground terms but otherwise are in a general position\footnote{In principle, one could consider smaller sets $\cal A$, which would correspond to the high level of skepticism of an agent who does not necessarily accept logical truths (axioms) as justified. We leave this possibility for further studies.}. There may be accepted but not knowledge-producing justifications, and vice versa. So, \emph{JEM}s do not analyze {\bf why} certain justifications are knowledge-producing (knowledge-grade) or accepted, but rather provide a formal framework that accommodates these notions. 

\subsection{Injective justifications}
The notions of {\it accepted} and {\it knowledge-producing} justifications should be utilized with some caution. 
Imagine a justification $t$ for $F$ (i.e., $\lc{t}F$ holds) and for $G$ ($\lc{t}G$) such that, intuitively, $t$ is a knowledge-producing justification for $F$ but not for $G$. Is such a $t$ knowledge-producing, trustworthy, acceptable for a reasonable agent? The answers to these questions seem to depend on $F$ and $G$, and if we prefer to handle justifications as objects rather than as justification assertions, it is technically convenient to assume that justifications are {\it injective}, or statement-specific: 
\[ \mbox{\it there is at most one formula $F$ such that $\lc{t}F$ holds}. \]
Conceptually, by going injective with justifications, one does not lose generality: if $p$ is a proof of $F$ and of something else, then the same $p$ with a designated statement $F$, symbolically, a pair $(p,F)$, can be regarded as an injective/single conclusion proof of $F$. So, with injective justifications without losing generality we gain the ability to go back and forth between justifications and statements. 

In model $\cal R$ for the Russell Example, Section~\ref{Russell}, all justifications are injective. 

Note that ${\sf J}^-$ is not complete with respect to the class of basic models which are both sharp and injective (as model $\cal R$ for the Russell Example). Indeed, consider formula $F$, 
\[ F = \neg(\lc{x}(P\imp Q)\land \lc{y}P\land \lc{[x\!\cdot\! y]}R)\]
where $P,Q,R$ are distinct propositional letters and $x,y$ justification variables. Obviously, $F$ holds in any basic model $\ast$ which is sharp and injective. Imagine a sharp injective 
$\ast$ in which $\lc{x}(P\imp Q)$ and $\lc{y}P$ hold. In such $\ast$, $[x\!\cdot\! y]^\ast=\{Q\}$, hence both $\neg\lc{[x\!\cdot\! y]}R$, and $F$ hold. On the other hand, $F$ is not derivable in 
${\sf J}^-$, e.g., $F$ fails in the basic model $\ast$ with $x^\ast=\{P\imp Q\}$, $y^\ast=\{P\}$, and $t^\ast={\it Fm}$ for any other $t$ (check closure condition~(\ref{ccapplication})!). So, ``sharp and injective" justification tautologies constitute a proper extension of ${\sf J}^-$. It would be interesting to find its complete axiomatization. It would seem that functionality axioms from \cite{AS93,Kru97,Kru02} might be relevant here.

%\begin{definition} A ${\sf J}_{\sf DEC}$-model $\cal M=(W,\ast)$ is {\it injective} if for each justification $t$, $t^\ast$ contains at most one formula. 
%\end{definition}

\section{Russell scenario as a {\it JEM}}\label{Russell}
Consider the version of ${\sf J}^-$ in language with two justification variables $w$ and $r$, one propositional letter $B$, and injective constant specification \emph{CS}:
\[ \lc{c_n}A\in{\it CS}\ \ \mbox{iff}\ \ \mbox{\it $A$ is an axiom and $n$ is the G\"odel number of $A$.} \]
Define a model $\ast$ such that 
\begin{itemize}
 \item $B^\ast = 1$, i.e., $\models_\ast B$;
 \item $c_n^\ast=\{A\}$ if $A$ is an axiom and $n$ is the G\"odel number $|A|$ of $A$, and $c_n^\ast=\emptyset$ otherwise;
 \item $w^\ast=r^\ast=\{B\}$, e.g.,  $\models_\ast \lc{r}B$ and $\not\models_\ast\lc{r}F$ for any $F$ other than $B$ (the same for $w$);
 \item Application is sharp: $(s\!\cdot\! t)^\ast = s^\ast\triangleright t^\ast$.
 \end{itemize}
A \emph{JEM} $\cal R$ (for Russell's scenario) is $(\ast,{\cal A},{\cal E})$ with 
\begin{itemize}
\item ${\cal A}=\overline{\{w\}}$, i.e., the set of accepted justifications is $\{w\}$, properly closed;
\item ${\cal E}=\overline{\{r\}}$, i.e., the set of knowledge-producing justifications is $\{r\}$, properly closed.
\end{itemize}

Though the idea behind $\cal R$ is quite intuitive, we need to fill in some technical details: extending truth evaluations to all terms and formulas and checking closure conditions. 

\subsection{Technicalities of the model}\label{technicalities}

Put $c_{|A|}^\ast=\{A\}$ for each axiom $A$ of ${\sf J}^-(CS)$. Technically, this is an inductive definition. Base: $c_0^\ast = \emptyset$, given $0$ is not a G\"odel number of any formula. Inductive step: suppose $n$ is the G\"odel number $|F|$ of some formula $F$. 
If $F$ is an axiom of $J^-$, put $c_n^\ast=\{F\}$. If $F=\lc{c_k}G$ for some $c_k$ and $G$, then, by monotonicity of G\"odel numbering, 
$k<n$, hence $c_k^\ast$ is defined. If $c_k^\ast = \{G\}$, then $\lc{c_k}G$ is an axiom of ${\sf J}^-(CS)$ and we can put $c_n^\ast=\{F\}$.
In all other cases, $c_n^\ast = \emptyset$. 

Since application is sharp, the evaluation of each term is, at most, a singleton. Together with Boolean truth tables, this determines the truth value of any formula. 

\begin{Lemma}\label{fact} Each $t\in {\it Tm}$ is factive, $\models_\ast \lc{t}F\imp F$. 
\end{Lemma}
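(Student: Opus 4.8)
The plan is to argue by structural induction on the term $t$, tracking what the set $t^\ast$ can be. The key invariant I would establish is stronger than mere factivity: for every $t\in{\it Tm}$, either $t^\ast=\emptyset$, or $t^\ast=\{F\}$ for a single formula $F$ with $\models_\ast F$. Once this invariant is in place, factivity is immediate: if $\models_\ast \lc{t}F$ then $F\in t^\ast$, so $t^\ast$ is nonempty, hence $t^\ast=\{F\}$ and $\models_\ast F$ by the invariant; thus $\models_\ast \lc{t}F\imp F$. (Note the evaluation is a singleton or empty because application is sharp, as remarked just before the Lemma, so the "singleton or empty" part of the invariant only needs checking for the truth condition on the element.)

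The base cases are the atomic terms. Justification variables: $w^\ast=r^\ast=\{B\}$ and $\models_\ast B$ by the definition of $\ast$, so the invariant holds. Justification constants $c_n$: by the inductive construction in Section~\ref{technicalities}, either $c_n^\ast=\emptyset$ (nothing to check), or $c_n^\ast=\{F\}$ where $\lc{c_k}G$-shaped formulas aside, $F$ is either an axiom of ${\sf J}^-$ or a formula of the form $\lc{c_k}G$ with $c_k^\ast=\{G\}$ (a \emph{CS}-axiom). In the first case $\models_\ast F$ because $\ast$ is a ${\sf J}^-$-model (propositional axioms and the application axiom are valid by Theorem~\ref{bmodelsforJ0}). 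In the second case $F=\lc{c_k}G$ with $G\in c_k^\ast$, so $\models_\ast \lc{c_k}G$, i.e. $\models_\ast F$. Either way the invariant holds at $c_n$.

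For the inductive step, consider $t=s\cdot u$ with the invariant holding for $s$ and $u$. By sharpness, $(s\cdot u)^\ast = s^\ast\triangleright u^\ast$. If either $s^\ast$ or $u^\ast$ is empty, then $s^\ast\triangleright u^\ast=\emptyset$ and we are done. Otherwise $s^\ast=\{H\}$, $u^\ast=\{K\}$ with $\models_\ast H$ and $\models_\ast K$; by the definition of $\triangleright$, $(s\cdot u)^\ast$ is nonempty only if $H$ has the form $K\imp F$, in which case $(s\cdot u)^\ast=\{F\}$. Then from $\models_\ast H$, i.e. $\models_\ast K\imp F$, and $\models_\ast K$, Boolean evaluation gives $\models_\ast F$, so the invariant holds at $s\cdot u$. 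This closes the induction, and factivity follows as noted. The only mildly delicate point — the main obstacle — is making sure the constant base case is handled correctly: one must observe that the inductive clause "$F=\lc{c_k}G$ with $c_k^\ast=\{G\}$" is exactly the clause under which $\lc{c_k}G$ is declared a \emph{CS}-axiom, so that $c_n^\ast=\{F\}$ is only set when $F$ is genuinely a ${\sf J}^-(CS)$-axiom and hence $\ast$-valid; everything else reduces to the sharp-application bookkeeping already set up in Section~\ref{technicalities}.
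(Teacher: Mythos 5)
Your proposal is correct and follows essentially the same route as the paper's proof: induction on the structure of $t$, with the base cases ($w^\ast=r^\ast=\{B\}$ and $B$ true; constants evaluating to axioms, which are true) and an application step in which sharpness plus \emph{Modus Ponens} at the Boolean level preserves truth of the justified formula. Your explicit invariant (``$t^\ast$ is empty or a singleton of a true formula'') and the careful handling of the \emph{CS}-axiom subcase for constants merely spell out what the paper's terser argument leaves implicit.
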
 
\begin{proof} Induction on $t$. 
%If $t^\ast=\emptyset$, then the claim holds vacuously. 
Assume $\models_\ast\lc{t}F$; that means $t^\ast=\{F\}$. If $t$ is $w$ or $r$, then $F$ is $B$, which is true in the model $\ast$.  If $t$ is a constant, then $F$ is an axiom and hence true in $\ast$. The induction step corresponds to application, which preserves the truth of justified formulas. 
\end{proof}

From Lemma~\ref{fact} it follows that all accepted justifications are consistent and all knowledge-producing justifications are factive. Therefore, ${\cal R}=(\ast, {\cal A}, {\cal E})$ is indeed a \emph{JEM}. 

\subsection{Main result}

\begin{Theor}\label{main} In model $\cal R$, sentence $B$ is true, justified and believed, but not known.  
\end{Theor}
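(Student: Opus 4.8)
The statement unpacks into four assertions about the model $\cal R=(\ast,{\cal A},{\cal E})$: (i) $B$ is true, i.e. $\models_\ast B$; (ii) $B$ is justified, i.e. $\models_\ast\lc{t}B$ for some term $t$; (iii) $B$ is believed, i.e. $\models_\ast\lc{t}B$ for some $t\in{\cal A}$; and (iv) $B$ is not known, i.e. there is no $t\in{\cal A}\cap{\cal E}$ with $\models_\ast\lc{t}B$. I would handle these in order, since the first three are essentially immediate from the definition of $\ast$ and the fourth carries the only real content.

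For (i), $B^\ast=1$ holds by construction of $\ast$. For (ii) and (iii), note that $w^\ast=\{B\}$, so $B\in w^\ast$ and hence $\models_\ast\lc{w}B$; since ${\cal A}=\overline{\{w\}}$ contains $w$, this same witness shows $B$ is believed, giving (iii), and a fortiori (ii). (One could equally use $\lc{r}B$ for (ii).) The work is all in (iv). The plan is to compute $\term{t}^\ast = t^\ast$ (using the earlier notation loosely) for every $t\in{\cal A}\cap{\cal E}$ and show $B$ never lies in it. The key structural fact is that ${\cal A}=\overline{\{w\}}$ and ${\cal E}=\overline{\{r\}}$ are the proper closures of $\{w\}$ and $\{r\}$ respectively, so ${\cal A}$ consists of $w$, all constants, and everything generated from those by application, and similarly ${\cal E}$ consists of $r$, all constants, and their application-closure. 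Therefore ${\cal A}\cap{\cal E}$ contains neither $w$ nor $r$ individually (they are distinct variables, $w\notin{\cal E}$ and $r\notin{\cal A}$), so every term in ${\cal A}\cap{\cal E}$ is a \emph{ground} term: built from constants alone. The heart of the argument is then: for every ground term $t$, $B\notin t^\ast$.

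I would prove this last claim by induction on the ground term $t$. Base case: $t$ is a constant $c_n$; then by the construction in Section~\ref{technicalities}, $c_n^\ast$ is either $\emptyset$ or the singleton $\{A\}$ where $A$ is the axiom with Gödel number $n$. Since $B$ is a propositional letter, it is not a ${\sf J}^-(CS)$-axiom, so $B\notin c_n^\ast$. Inductive step: $t=[s\!\cdot\! u]$ with $s,u$ ground; by sharpness $t^\ast = s^\ast\triangleright u^\ast$, so any element of $t^\ast$ has the form $F$ where $G\imp F\in s^\ast$ and $G\in u^\ast$ for some $G$. By induction hypothesis $B\notin s^\ast$ and $B\notin u^\ast$; but I need more than that — I need that no formula of the form $G\imp B$ lies in $s^\ast$. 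This forces a strengthening of the induction hypothesis: I would instead prove that for every ground $t$, $t^\ast$ is either empty or consists of a single axiom of ${\sf J}^-(CS)$ (equivalently, $t^\ast\subseteq\{\text{axioms}\}$). This is preserved: $s^\ast\triangleright u^\ast$ applies Modus Ponens to a set of axioms and a set of axioms, and the conclusion $F$ of such a step need not be an axiom in general — so actually the clean invariant is rather that $t^\ast$ contains only formulas that are ${\sf J}^-(CS)$-\emph{provable} (by Lemma~\ref{fact}'s style of reasoning, or directly: constants evaluate to axioms, application corresponds to MP, so everything in any ground $t^\ast$ is provable in ${\sf J}^-(CS)$). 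Since $B$ is a propositional letter, $\not\proves_{{\sf J}^-(CS)} B$ (it fails in, e.g., any basic model with $B^\ast=0$), so $B\notin t^\ast$ for every ground $t$, completing (iv).

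\textbf{Main obstacle.} The routine parts are genuinely routine; the one place demanding care is the induction in (iv), specifically choosing the right induction hypothesis. The naive hypothesis ``$B\notin t^\ast$'' does not push through the application step, because knowing $B\notin s^\ast$ and $B\notin u^\ast$ says nothing about whether $G\imp B\in s^\ast$. The fix is to strengthen to ``every member of $t^\ast$ is provable in ${\sf J}^-(CS)$'' (or the slightly stronger ``$t^\ast$ is a set of axioms or empty'' in the base case, closed under MP-provability thereafter), which does close under application since the triangleright operation is exactly one application of Modus Ponens and ${\sf J}^-(CS)$-provability is closed under MP. A secondary point worth stating explicitly is \emph{why} ${\cal A}\cap{\cal E}$ contains only ground terms: because the only non-constant generators of ${\cal A}$ and ${\cal E}$ are the distinct variables $w$ and $r$, and proper closure under application of a set of ground terms together with one variable $v$ yields only ground terms plus terms in which $v$ occurs; intersecting the $w$-family with the $r$-family kills all non-ground terms. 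I would make this combinatorial observation precise as a small lemma before invoking it.
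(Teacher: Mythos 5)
Your proposal is correct, and parts (i)--(iii) match the paper exactly, but your argument for (iv) takes a genuinely different route. The paper introduces an auxiliary basic model $\bullet$, identical to $\ast$ except that $B^\bullet=0$, and proves two lemmas: first, that $t^\ast=t^\bullet$ for every term $t$ (the sharp evaluation of terms is driven only by the atomic justification assignments, not by the truth value of $B$), and second, that every $g\in{\cal A}\cap{\cal E}$ is factive \emph{in $\bullet$} (constants evaluate to axioms, which remain true in $\bullet$, and application preserves factivity). Then $\models_\ast\lc{g}B$ would give $\models_\bullet\lc{g}B$ and hence $\models_\bullet B$, a contradiction. You instead propagate a syntactic invariant by induction on ground terms --- every member of $t^\ast$ is ${\sf J}^-(CS)$-provable --- and finish by observing that $B$, a propositional letter, is unprovable. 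The two arguments share their essential content: both rest on the combinatorial fact that ${\cal A}\cap{\cal E}=\overline{\{w\}}\cap\overline{\{r\}}$ consists only of ground terms (which the paper asserts in one line and you rightly flag as deserving its own small lemma), and both ultimately exploit that ground justifications cannot be sensitive to the contingent truth of $B$. Your version is slightly more economical in that it needs no analogue of the invariance lemma $t^\ast=t^\bullet$; the countermodel witnessing $\not\proves_{{\sf J}^-(CS)}B$ plays the role of $\bullet$ but is used only once, through soundness. The paper's version is more semantic in flavor and makes explicit the ``perturbation'' intuition that the agent's constant-generated evidence is insensitive to flipping $B$. Your diagnosis of the main obstacle --- that the naive induction hypothesis $B\notin t^\ast$ does not survive the application step and must be strengthened --- is accurate and is precisely the issue the paper's factivity lemma is designed to circumvent.
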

\begin{proof}
In model $\cal R$, sentence $B$ is 
\begin{itemize}
\item true, since $\models_\ast B$; 
\item justified, since $\models_\ast \lc{w}B$;
\item believed, since $w\in{\cal A}$.
\end{itemize}
We have to show that for any justification $g\in{\cal A}\cap{\cal E}$, $\not\models_\ast \lc{g}B$. 

Consider an auxiliary basic model $\bullet$ which is the same as $\ast$ but with $B^\bullet=0$, i.e., the truth value of $B$ is flipped from `true' to `false.'  In particular, application in $\bullet$ is sharp.

\begin{Lemma}\label{asts} For each justification term $t$, $$t^\ast = t^\bullet.$$ 
\end{Lemma}
\begin{proof} The inductive process (based on sharp application) of evaluating all justifications given evaluations of atomic justifications operates only with formulas of type $\lc{t}F$ and starts with the same initial set of such formulas in $\ast$ and $\bullet$. Hence the results of these processes in  $\ast$ and $\bullet$ coincide. 

\end{proof}
In particular, for all $g\in{\cal A}\cap{\cal E}$, $g^\ast = g^{\bullet}$, and if $\models_\ast \lc{g}B$, then $\models_\bullet \lc{g}B$ as well. 

\begin{Lemma}\label{gfact} Each $g\in{\cal A}\cap{\cal E}$ is factive in $\bullet$, i.e., $\models_\bullet\lc{g}F\imps F$.
\end{Lemma} 
\begin{proof} All  $g\in{\cal A}\cap{\cal E}$ are obtained from constants by application. By construction, if $\models_\bullet\lc{c}X$, then 
$X\in c^\ast$ and $X$ is an axiom, hence true.  Application obviously preserves factivity. 

\end{proof}
To complete the proof of Theorem~\ref{main}, suppose ${\cal R}\models \lc{g}B$, i.e., $\models_\ast \lc{g}B$, for some $g\in{\cal A}\cap{\cal E}$. By Lemma~\ref{asts}, $\models_\bullet \lc{g}B$, and, by Lemma~\ref{fact}, $\models_\bullet B$, which is not the case. 

\end{proof}

\subsection{Can Russell's scenario be made modal?}
One could try to express Russell's scenario in a modal language by introducing the justified belief modality
\[ {\bf J}F\ \ \Leftrightarrow\ \ \mbox{\it there is $t\in{\cal A}$ such that $\models\lc{t}F$}, \]
and the knowledge-producing modality 
\[ {\bf E}F\ \ \Leftrightarrow\ \ \mbox{\it there is $t\in{\cal E}$ such that $\models\lc{t}F$}, \]
and by stipulating that $F$ is known iff $F$ is both accepted and supported by a knowledge-producing justification:
\[ {\bf K}F\ \ \Leftrightarrow\ \ {\bf J}F\wedge{\bf E}F.\]

This, however, fails, since both ${\bf J}B$ and ${\bf E}B$ hold in $\cal R$, but ${\bf K}B$ does not. We are facing a Gettier-style phenomenon, when a proposition is supported by a knowledge-producing justification (hence true), and believed, but not known (since knowledge-producing and accepted justifications for $B$ are different). This once again illustrates the limitations of modal language in tracking and sorting justifications.

\section{Multi-world {\it JEM}s and epistemic Kripke models}\label{toconventional}

A multi-world version of {\it JEM} is 
\[{\cal M}=(W,\ast,{\cal A},{\cal E})\]
such that $W$ is a nonempty set of states (possible worlds) and for each $u\in W$, the corresponding $(\ast,{\cal A},{\cal E})$ is a single-state {\it JEM}. 

Let us outline what simplifying assumptions would make a {\it JEM} ${\cal M}=(\ast,{\cal A},{\cal E})$ a Kripke model for modal logic. 

First, we have to convert justification assertions $\lc{t}F$ into modal epistemic assertions $\Box F$. A natural way of accomplishing this would be postulating a {\it justification indifference} property: 
\[ \lc{t}F\ \leftrightarrow\ \lc{s}F \]
holds at each world.  

Second, we have to adopt an additional {\em fully explanatory property} of the model. We define the accessibility relation $R$ on $W$ as 
\[ uRv\ \ \ \ \mbox{\it iff}\ \ \ \ \mbox{\it for all $t$ and $F$,}\ u\models \lc{t}F \Rightarrow v\models F .\]
This standard definition secures that 
\[ u\models \lc{t}F\ \ \Rightarrow\ \ R(u)\models F , \]
but not ``$\Leftrightarrow$,'' which is a hidden assumption in Kripke models, cf. \cite{Art16KTM}. We have to make this assumption explicit. 
A model ${\cal M}$ is {\em fully explanatory}, if 
\[ R(u)\models F\ \ \Rightarrow\ \ \mbox{\it for some $t$}, u\models \lc{t}F, \]
cf.  \cite{Fit05} in which this property was introduced and studied. The fully explanatory property can be viewed as a propositional form of knowledge of the model, cf. \cite{Art16KTM}.

It is now straightforward that modal logic {\sf T} corresponds to {\it JEM}s with justification indifference and the fully explanatory property (a narrow special case of {\it JEM}s). To obtain more conventional {\sf S5}-models, we have to adopt yet another additional assumption of {\it decidability of justifications}, but leave this to further studies.  

%These observations demonstrate the flexibility and expressive power of {\it JEM}s compare to the standard epistemic models. 

\section{Discussion}
Model $\cal R$ fairly represents a situation with two justifications, right and wrong, for a given proposition and, in this respect, serves as a showcase for modeling epistemic situations with fallible justifications. 

There are many natural open questions that indicate possible research directions: 

\begin{itemize}
\item Are justification assertions checkable, decidable for an agent? 
\item Is the property of a justification to be knowledge-producing checkable by the agent? 
\item In multi-agent cases, how much do agents know about each other and about the model? 
\item Do agents know each other's accepted and knowledge-producing justifications? 
\item What is a complete axiomatization of sharp and injective models over ${\sf J}^-$? 
\item What is the complexity of these new justification logics and what are their feasible fragments which make sense for epistemic modeling? 
\end{itemize}

\section{Acknowledgements}
The author is grateful to Melvin Fitting, Vladimir Krupski, Elena Nogina, and Tudor Protopopescu for helpful suggestions. 
Special thanks to Karen Kletter for editing and proofreading this text.

\end{document}